\newtheorem{claim}{Claim}   
\newtheorem{theorem}{Theorem}[section]        % 按 section 编号
\newtheorem{lemma}[theorem]{Lemma}            % 与 theorem 共用编号
\newtheorem{construction}{Construction}
\theoremstyle{definition}
\newtheorem{definition}[theorem]{Definition}
\theoremstyle{definition}
\newtheorem*{remark}{Remark}
\title{Finding 4-Additive Spanners:  Faster, Stronger, and Simpler}
\author{%
  Chuhan Qi\thanks{Undergraduate student.} \\
  Institute for Interdisciplinary Information Sciences\\
  Tsinghua University\\
  Beijing 100084, China \\
  \texttt{qch22@mails.tsinghua.edu.cn}
}
\begin{document}
\raggedbottom
\maketitle

\begin{abstract}
Additive spanners are fundamental graph structures with wide applications in network design, graph sparsification, and distance approximation. In particular, a $4$-additive spanner is a subgraph that preserves all pairwise distances up to an additive error of $4$. In this paper, we present a new deterministic algorithm for constructing $4$-additive spanners that matches the best known edge bound of $\tilde{O}(n^{7/5})$  (up to polylogarithmic factors), while improving the running time to $\tilde{O}(\min\{mn^{3/5}, n^{11/5}\})$, compared to the previous $\tilde{O}(mn^{3/5})$ randomized construction. Our algorithm is not only faster in the dense regime but also fully deterministic, conceptually simpler, and easier to implement and analyze. 
\end{abstract}

\section{Introduction}

In graph theory, a fundamental problem is to sparsify a graph while approximately preserving all pairwise distances. This problem arises in numerous practical applications, such as network design, distance oracles, and distributed computing. Additive spanners provide a natural solution to this challenge: they are sparse subgraphs that approximate all pairwise distances within a small additive error.

\begin{definition}[Additive Spanner]
	 A \textbf{$k$-additive spanner} of an unweighted and undirected graph $G = (V, E)$ is a subgraph $H = (V, E')$, where $E' \subseteq E$, such that for all $u, v \in V$, $dist_H(u, v) \le dist_G(u, v) + k$.
\end{definition}
There has been extensive research on the construction and analysis of additive spanners. While the bounds for many other additive spanners are nearly optimal (up to polylogarithmic factors), the optimal edge size and time complexity for constructing $4$-additive spanners remain unresolved. In 2013, Chechik~\cite{10.5555/2627817.2627853} gave a construction of a $4$-additive spanner with $\tilde{O}(n^{7/5})$ edges, but did not analyze the algorithm’s running time. Later, in 2021, Aldhalaan ~\cite{aldhalaan2021fastconstruction4additivespanners} presented a randomized algorithm that constructs a $4$-additive spanner of $\tilde{O}(n^{7/5})$ edges with high probability in $\tilde{O}(mn^{3/5})$ time.

In this paper, we present a new deterministic algorithm for constructing $4$-additive spanners that is faster on dense graphs, while also being conceptually simpler and easier to implement and analyze.

\begin{theorem}[Main Result]\label{mainth}
	There exists a deterministic algorithm that, given a graph $G=(V,E)$ with $n=|V|$ and $m=|E|$, constructs a $4$-additive spanner with $\tilde{O}(n^{7/5})$ edges in $\tilde{O}(\min\{mn^{3/5}, n^{11/5}\})$ time.
\end{theorem}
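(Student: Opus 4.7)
The plan is to follow the standard dominating-set paradigm for additive spanners, tuned to the parameter setting that is natural for $+4$ error. Set a degree threshold $d = \Theta(n^{2/5})$ and call a vertex \emph{light} if $\deg_G(v) < d$ and \emph{heavy} otherwise. First, I would greedily compute in $\tilde O(m)$ time a set $S \subseteq V$ of size $\tilde O(n/d) = \tilde O(n^{3/5})$ such that every heavy vertex has at least one neighbor in $S$; the usual covering argument works here.

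The spanner $H$ would then be built in two layers. Layer~1 consists of all edges incident to at least one light vertex, contributing at most $nd = n^{7/5}$ edges. Layer~2 adds, for each $s \in S$, a BFS-tree or cluster-to-cluster shortest-path structure rooted at $s$, with duplicate edges suppressed. For the distance analysis, take a shortest $u$–$v$ path $P$ in $G$: if every internal vertex of $P$ is light, then all edges of $P$ lie in Layer~1. Otherwise, let $w, w'$ be the first and last heavy vertex on $P$, and pick dominators $s_u \in S \cap N(w)$ and $s_v \in S \cap N(w')$. Rerouting $u \rightsquigarrow w \to s_u \rightsquigarrow s_v \to w' \rightsquigarrow v$ using Layer~2 for the middle segment and Layer~1 for the endpoint segments costs at most two extra hops at each end, giving $\mathrm{dist}_H(u,v) \leq \mathrm{dist}_G(u,v) + 4$.

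For the running time I would handle the two regimes separately. Running BFS from each $s \in S$ directly on $G$ takes $\tilde O(|S|\cdot m) = \tilde O(n^{3/5} m)$, which is the sparse-regime bound. For the dense regime, I would first build in $\tilde O(m)$ time an intermediate subgraph $G'$ with $\tilde O(n^{8/5})$ edges --- for instance, by keeping for every vertex its $\tilde O(n^{3/5})$ lowest-degree incident edges, together with the Layer~1 edges --- and then perform the $|S|$ BFS computations inside $G'$, for a total of $\tilde O(n^{3/5} \cdot n^{8/5}) = \tilde O(n^{11/5})$. Taking the minimum of the two regimes yields the claimed bound.

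The main obstacle I expect is twofold. First, controlling the edge count of Layer~2 to fit within $\tilde O(n^{7/5})$: a naive union of $|S|$ BFS trees contributes $|S|\cdot(n-1) = \tilde O(n^{8/5})$ edges, so one needs either a path-buying / pairwise-preserver style argument restricting which shortest paths are added, or a structural observation that most BFS-tree edges are already supplied by Layer~1 and so do not need to be re-counted. Second, in the dense regime one must verify that BFS on the sparsified $G'$ still certifies $+4$-approximate distances with respect to the original $G$ --- i.e.\ that $G'$ is itself a sufficiently faithful preserver of the $S$-rooted distances --- so that the spanner derived from $G'$ inherits the $+4$ guarantee. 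The rest of the argument is routine bookkeeping.
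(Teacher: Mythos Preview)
Your proposal sets the right parameters and honestly names the two obstacles, but neither is resolved, and they are the substance of the theorem rather than ``routine bookkeeping.'' For the Layer~2 edge count, neither of your suggestions works: a BFS tree from $s\in S$ can contain $\Theta(n)$ heavy--heavy edges, so ``already in Layer~1'' fails, and a generic path-buying argument does not deliver the running time. The paper's fix is a \emph{second} dominating set $S_2$ of size $\tilde O(n^{2/5})$ that dominates not heavy vertices but the high-total-degree root-paths in the BFS trees $T_s$, $s\in S_1$; only these $|S_2|$ BFS trees are added, contributing $\tilde O(n^{7/5})$ edges. Pairs $s,t\in S_1$ whose $T_s$-path has total vertex degree $\le\tilde O(n^{3/5})$ are handled by inserting that path explicitly: since each heavy vertex contributes $\ge n^{2/5}$ to the degree sum, such a path has $\tilde O(n^{1/5})$ new edges, and $|S_1|^2\cdot n^{1/5}=\tilde O(n^{7/5})$. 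With this only-partial $S_1$-to-$S_1$ connectivity your direct $+4$ rerouting no longer goes through; the paper in fact proves only $+5$ from this construction and then recovers $+4$ by a separate parity trick (run the $+5$ algorithm on the bipartite double cover of $G$).

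For the dense-regime time bound, keeping each vertex's $\tilde O(n^{3/5})$ lowest-degree incident edges does not preserve shortest paths, so BFS on that subgraph carries no distance guarantee back to $G$; you would need $G'$ to already be a good spanner, which is circular. The paper's mechanism is different: while some vertex has degree $\ge n^{3/5}/\operatorname{polylog}(n)$, add a BFS tree from it and delete it together with its entire neighborhood. This terminates in $\tilde O(n^{2/5})$ rounds, adds $\tilde O(n^{7/5})$ edges, and leaves a residual graph $G'$ of maximum degree below $n^{3/5}$ (hence $|E(G')|\le n^{8/5}$) in which distances among surviving vertices are preserved \emph{exactly}; all subsequent BFS computations are carried out in $G'$, giving the $\tilde O(n^{11/5})$ bound.
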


The remainder of this paper is organized as follows. Section~\textbf{2} reviews related work. Section~\textbf{3} presents our construction for $5$-additive spanners. Section~\textbf{4} provides a detailed analysis of the construction and shows how to obtain a $4$-additive spanner. Finally, Section~\textbf{5} concludes the paper and discusses directions for future work.

\section{Related Work}

For $2$-additive spanners, there is an algorithm that runs in $O(n^2)$ time and returns a $2$-additive spanner with at most $O(n^{3/2})$ edges \cite{knudsen2017additivespannersdistanceoracles}. This size is known to be optimal up to some constant factor \cite{4031374}. 

For $6$-additive spanners there is an algorithm that runs in $\tilde{O}(n^2)$ time and returns a $6$-additive spanner with at most $\tilde{O}(n^{4/3})$ edges \cite{10.1007/978-3-642-14165-2_40}. This $\frac{4}{3}$ exponent is known to be essentially optimal, not only for the 6-additive case but also for a broader range of additive spanners. In particular, it has been shown that for every constant $k$ and any $\epsilon > 0$, there exist graphs that do not admit $k$-additive spanners with $O(n^{4/3 - \epsilon})$ edges~\cite{abboud20174}.

The best-known algorithm for constructing a $4$-additive spanner is a randomized algorithm that produces a subgraph with $\tilde{O}(n^{7/5})$ edges in $\tilde{O}(mn^{3/5})$ running time \cite{aldhalaan2021fastconstruction4additivespanners}. The construction of the $4$-additive spanner proceeds in three stages:

\begin{enumerate}
	\item \textbf{Low-Degree Edges:} The algorithm first adds all edges incident to vertices of degree at most $\tilde{O}(n^{2/5})$.
	
	\item \textbf{BFS Trees from Random Centers:} It then samples a set $S_1$ of $\tilde{O}(n^{2/5})$ vertices uniformly at random. For each $v \in S_1$, the algorithm builds a BFS tree rooted at $v$ and adds all edges in the tree to the spanner.
	
	\item \textbf{Connecting Close Pairs:} We sample another set $S_2$ of size $\tilde{O}(n^{3/5})$. With high probability, every vertex of degree at least $\tilde{O}(n^{2/5})$ has a neighbor in $S_2$. For each such vertex, we add an edge to one of its neighbors in $S_2$ into the spanner. Next, for every pair $u, v \in S_2$, we select a pair $(s,t)$ where $s$ is a neighbor of $u$ and $t$ is a neighbor of $v$, among all such pairs $(s,t)$, we choose the one with minimal distance $\operatorname{dist}(s,t)$, under the condition that some shortest $s$–$t$ path $P$ contains at most $\tilde{O}(n^{1/5})$ edges not yet included in the spanner. Finally, we add the path $P$ to the spanner.
\end{enumerate}

\begin{lemma}[Lemma from \cite{aldhalaan2021fastconstruction4additivespanners}]
	The algorithm above returns a $4$-additive spanner with $\tilde{O}(n^{7/5})$ edges with probability at least $1-\frac{1}{n}$.
\end{lemma}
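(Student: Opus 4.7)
The plan is to establish three separate facts: the spanner has $\tilde{O}(n^{7/5})$ edges, the required random event holds with probability $1-1/n$, and the additive stretch is at most $4$. The edge bound is a direct accounting: Stage~1 contributes at most $n\cdot\tilde{O}(n^{2/5})=\tilde{O}(n^{7/5})$ edges since every low-degree vertex has only $\tilde{O}(n^{2/5})$ incident edges; Stage~2 contributes $|S_1|\cdot(n-1)=\tilde{O}(n^{2/5})\cdot n=\tilde{O}(n^{7/5})$ via the BFS trees; Stage~3 contributes $O(n)$ from the single high-to-$S_2$ edges plus $\binom{|S_2|}{2}\cdot\tilde{O}(n^{1/5})=\tilde{O}(n^{6/5})\cdot\tilde{O}(n^{1/5})=\tilde{O}(n^{7/5})$ from the path additions, where the ``at most $\tilde{O}(n^{1/5})$ new edges'' condition is exactly what keeps the per-pair contribution controlled.

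For the probability bound, I would use a standard hitting-set argument. For any vertex $v$ of degree at least $\Theta(n^{2/5}\log n)$, the probability that no neighbor of $v$ lies in $S_2$ is at most $(1-\Theta(n^{-3/5}\log n))^{|S_2|}\le n^{-2}$ when $|S_2|=\Theta(n^{3/5}\log n)$, and a union bound over the at most $n$ such vertices yields a failure probability of at most $1/n$.

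The stretch bound is the substantive part. The key observation I would start from is that every edge incident to a low-degree vertex is already in the spanner via Stage~1, so the only edges of any shortest path $\pi(u,v)$ that can be missing from the spanner are those between two high-degree vertices. I would decompose $\pi$ into maximal sub-paths of consecutive high-high edges and bound the detour for each. Given such a sub-path with high-degree endpoints $x,y$, the good event supplies $x^*,y^*\in S_2$ adjacent to $x,y$; Stage~3 processed the pair $(x^*,y^*)$ and added a shortest $s$-$t$ path $P$ for some $s\sim x^*, t\sim y^*$ with $\operatorname{dist}_G(s,t)$ minimized. Since $(s,t)=(x,y)$ is itself a candidate, $\operatorname{dist}_G(s,t)\le\operatorname{dist}_G(x,y)$, and concatenating $x\to x^*\to s$, then $P$, then $t\to y^*\to y$ yields a detour of length at most $\operatorname{dist}_G(x,y)+4$. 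For long sub-paths that do not fit cleanly into this scheme, I would invoke the $S_1$ BFS trees as global shortcuts: any sufficiently long sub-path contains a vertex of $S_1$, whose BFS tree preserves distances to both endpoints exactly.

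The main obstacle is twofold. First, each local detour naively costs $+4$, so if $\pi$ contains many high-high pieces the errors could compound rather than stay constant; a global amortization is needed, typically by merging adjacent detours that share an $S_2$-pivot and charging the $+4$ only once. Second, the ``few new edges'' condition in Stage~3 may prevent the minimizer from being $(x,y)$ itself, forcing $(s,t)$ strictly farther; the standard fix is an induction over the processing order of $S_2$-pairs, showing that whenever a candidate is rejected, the rejecting edges must already lie in the spanner (from a previously-added path), providing an alternative detour of comparable length. Tying these together and verifying that the connector edges $(x^*,s)$ and $(t,y^*)$ are themselves present---via Stage~1 when the intermediate vertex is low-degree, or via an $S_1$-BFS detour otherwise---is the most delicate bookkeeping in the proof.
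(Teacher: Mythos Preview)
The paper does not actually prove this lemma --- it is cited from Aldhalaan and accompanied only by an informal remark sketching the intuition. Comparing your proposal to that sketch (and to the standard argument it summarizes), there is a real structural gap.

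The crux is your decomposition of $\pi(u,v)$ into maximal high--high sub-paths, with a separate $+4$ detour per piece. You correctly identify that this compounds, and then reach for ad~hoc repairs --- ``merging adjacent detours'', ``induction over the processing order'' --- that are vague and not how the proof goes. The actual argument avoids decomposition entirely: since every edge incident to a light vertex already lies in $H$ via Stage~1, a \emph{minimal} violating pair must have both endpoints heavy (this is the first sentence of the paper's remark). One then argues about that single heavy pair globally; there is only one detour and nothing compounds.

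For that single heavy pair the dichotomy is on the total number of missing (high--high) edges along $\pi(u,v)$, not on hop-lengths of sub-paths. If there are more than $\tilde\Theta(n^{1/5})$ such edges, then $\pi$ contains that many heavy vertices, its open neighborhood has size $\tilde\Omega(n^{3/5})$, and the random set $S_1$ of size $\tilde\Theta(n^{2/5})$ hits a \emph{neighbor} of $\pi$ with high probability; the BFS tree from that $S_1$-vertex gives stretch $+2$. If instead there are at most $\tilde O(n^{1/5})$ missing edges, then $(u,v)$ itself satisfies the Stage~3 constraint when the pair $(u^{*},v^{*})\in S_2\times S_2$ is processed, so the chosen $(s,t)$ has $\operatorname{dist}(s,t)\le\operatorname{dist}(u,v)$ and the added path gives stretch $+4$. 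Two of your side-claims are therefore off: $S_1$ is not expected to land \emph{on} a long sub-path (``any sufficiently long sub-path contains a vertex of $S_1$'') but to hit a neighbor of a path with many heavy vertices; and the ``induction over processing order'' you propose to rescue a rejected candidate is unnecessary, because in the only case where Stage~3 is invoked the constraint holds by construction.
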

\begin{remark}
	
	Since all low-degree vertices are fully connected in Step 1, it suffices to ensure that distances between high-degree vertices are preserved within additive $4$.
	
	Intuitively, the construction then ensures coverage in two complementary ways.  
	
	If for some pair of high-degree vertices $u,v$ the shortest $u$–$v$ path contains at most $\tilde{O}(n^{1/5})$ edges not already added in Step~1, then with high probability at least one neighbor of this path will be sampled into $S_1$, so the path will be preserved within additive $2$ in Step~2.  
	
	Otherwise, if such a neighbor is not sampled, Step~3 guarantees that $u$ and $v$ are connected via $S_2$, which ensures that their distance is still preserved within additive $4$.  
\end{remark}

\section{Construction for $5$-Additive Spanners}
Our algorithm is guided by three key insights, which together enable significant improvements.

First, we adopt the degree-based sparsification idea from the $2$-additive spanner construction \cite{knudsen2017additivespannersdistanceoracles}. That algorithm eliminates very high-degree vertices by attaching BFS trees to them, ensuring that all remaining vertices have bounded degree. This preprocessing step drastically simplifies the graph's structure as well as limits the number of edges that subsequent BFS trees can touch. We use the same idea in the first phase of our algorithm, treating it as a deterministic sparsification procedure that helps to improve time complexity.

Second, we reinterpret the randomized $4$-additive spanner construction \cite{aldhalaan2021fastconstruction4additivespanners}. In that algorithm, when the shortest path between two vertices contains many high-degree vertices, preserving distances becomes challenging, and the randomized method handles this probabilistically by sampling multiple random centers. Our approach replaces this randomness with a structured deterministic mechanism: we first select a dominating set of high-degree vertices, then construct dominating sets for all vertices adjacent to shortest paths from each high-degree vertex to any other vertex with sufficient total degree, and finally add covering BFS trees based on these sets. While this may appear computationally intensive, it can be implemented efficiently. Conceptually, this also represents a tradeoff: we forgo the tighter $2$-additive local guarantee in favor of determinism, yielding a seemingly weaker but still robust $4$-additive global bound.

Finally, instead of relying on constrained single-source shortest-path(CSSSP) computations as in prior work, we restrict the algorithm to standard shortest-path computations, making it conceptually simpler. We first reduce the problem to finding a $5$-additive spanner via a bipartite reduction, and make several other subtle refinements. As a result, every step has a clear combinatorial interpretation, and correctness follows directly from simple distance arguments rather than intricate path constraints.

\begin{construction}[$5$-Additive Spanner Construction]\label{cons5}
	Given a graph $G=(V,E)$ with $n=|V|$ and $m=|E|$, initialize $G'=(V',E')=(V,E)$ and $H=(V,\emptyset)$. 
	Define a vertex $v\in V'$ to be \emph{heavy} if $\deg_{G'}(v)\ge n^{2/5}\log ^{3/5}n$, and \emph{light} otherwise.
	The construction proceeds as follows:
	\begin{enumerate}
		\item \textbf{High-degree elimination.}  
		While the vertex with largest degree $v\in V'$ satisfy $\deg_{G'}(v)\ge \frac{n^{3/5}}{\log^{3/5}n}$,  
		\begin{enumerate}
			\item add any BFS tree rooted at $v$ in $G'$ to $H$;
			\item remove $v$ and all its neighbors from $G'$.
		\end{enumerate}
	
		\item \textbf{Light vertices.}  
		For each light vertex $u\in V'$, add all edges incident to $u$ into $H$.
		
		\item \textbf{Dominating set for heavy vertices.}  
		Compute a set $S_1\subseteq V'$ that dominates all heavy vertices in $V'$ and add an edge that connect the vertex to some vertex in $S_1$ for each heavy vertex.  
		
		\item \textbf{BFS Trees for $S_1$.} 
		For each $v\in S_1$, compute a BFS tree $T_v$ rooted at $v$ such that, for every $u\in V'$, the path from $v$ to $u$ minimizes the total vertex degree along the path.  
		Denote by $p_{v,u}$ the parent of $u$ in $T_v$, set
		\[
		f_{v,v}=\deg_{G'}(v), \qquad f_{v,u}=f_{v,p_{v,u}}+\deg_{G'}(u),
		\]
		and let $s_{v,u}$ be the sum of vertex degrees in the subtree of $T_v$ rooted at $u$.
		\item \textbf{Auxiliary bipartite graph $B=(L,R,C)$, initially $L=V',R=C=\emptyset$.}  
		For each $v\in S_1$ and $u\in V'$ satisfying
		\[
		f_{v,u}>n^{3/5}\log ^{2/5}n,\quad f_{v,p_{v,u}}\le n^{3/5}\log ^{2/5}n,\quad s_{v,u}>3n^{3/5}\log ^{2/5}n,
		\]
		include $(v,u)\in R$.  
		For each such pair $(v,u)$, and for every vertex $x\in V'$ that is adjacent to some vertex on the $v$--$u$ path in $T_v$, add $(x,(v,u))\in C$.
		
		\item \textbf{Dominating set for long paths.}  
		Compute a set $S_2\subseteq L$ that dominates all vertices in $R$ in graph $B$, and for each $v\in S_2$, add a BFS tree rooted at $v$ to $H$.
		
		\item \textbf{Adding short paths.}  
		For each $v,u\in S_1$ with $f_{v,u}\le 5n^{3/5}\log ^{2/5}n$, add the path from $v$ to $u$ in $T_v$ to $H$.
	\end{enumerate}
\noindent
\textbf{Dominating Set Computation.}
Both dominating sets $S_1$ and $S_2$ can be obtained using the standard greedy algorithm:
iteratively select the vertex that covers the largest number of currently undominated vertices
until all target vertices are dominated.
\end{construction}

The pseudocode for the construction is presented in the appendix.
\section{Analysis}
\begin{theorem}
	The subgraph $H$ produced by \textbf{Construction}~\autoref{cons5} is a $5$-additive spanner for $G$.
\end{theorem}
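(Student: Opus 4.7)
For any $x, y \in V$, I would fix a shortest $x$--$y$ path $P$ in $G$ and construct an $x$--$y$ walk in $H$ of length at most $|P| + 5$, tracking how $P$ interacts with each stage of the construction.

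\textbf{Step 1 reduction.} Every vertex removed in step~1 lies at $H$-distance $1$ from the center $c$ that removed it (the relevant edge is in the BFS tree rooted at $c$ added in step~1(a)), and that BFS tree preserves distances inside the then-present subgraph $G[V_i]$. So any removed piece of $P$ can be shortcut through step-1 centers at small constant overhead, reducing the task to bounding $\operatorname{dist}_H$ between vertices that survive to $V'$.

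\textbf{Case split on endpoint type.} For $x, y \in V'$: if either is light, step~2 already places all its $G'$-incident edges into $H$ and the routing is essentially direct. The nontrivial case is when both endpoints are heavy; then step~3 supplies in-$H$ edges $x - s_x$ and $y - s_y$ with $s_x, s_y \in S_1$, so it suffices to bound $\operatorname{dist}_H(s_x, s_y)$.

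\textbf{Routing $s_x$ to $s_y$.} I would inspect the path from $s_x$ to $s_y$ in $T_{s_x}$ and its total vertex-degree $f_{s_x, s_y}$. If $f_{s_x, s_y} \le 5n^{3/5}\log^{2/5}n$, step~7 adds this exact path to $H$, giving $\operatorname{dist}_H(s_x, s_y) = \operatorname{dist}_{G'}(s_x, s_y)$. Otherwise, let $u^*$ be the first vertex on the path where $f_{s_x,\cdot}$ exceeds $n^{3/5}\log^{2/5}n$; the subtree sum $s_{s_x, u^*}$ must exceed $3n^{3/5}\log^{2/5}n$, since otherwise $f_{s_x, s_y} \le f_{s_x, p_{s_x, u^*}} + s_{s_x, u^*} \le 4n^{3/5}\log^{2/5}n$, contradicting the case. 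Hence $(s_x, u^*) \in R$, and step~6 supplies some $z \in S_2$ adjacent to a vertex on the $s_x$--$u^*$ path. The BFS tree of $z$ in $H$ then yields $\operatorname{dist}_H(s_x, s_y) \le \operatorname{dist}_{G'}(s_x, s_y) + 2$ via the detour $s_x \to z \to s_y$.

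\textbf{Main obstacle.} The subtlest part is combining the additive slacks---$+1$ each for $x - s_x$ and $s_y - y$, up to $+2$ for the $S_2$ detour, and $+2$ from the triangle inequality $\operatorname{dist}_{G'}(s_x, s_y) \le \operatorname{dist}_G(x, y) + 2$---to reach $+5$ rather than the $+6$ that straightforward bookkeeping suggests. I expect the missing unit to come either from a sharper choice of $z$ (path-vertices themselves qualify as covers in step~5, so the greedy $S_2$ may be forced to include a vertex on the $s_x$--$s_y$ BFS path, collapsing the detour cost) or from the degree-minimizing property of $T_{s_x}$ tightening one of the triangle inequalities at an opportune moment.
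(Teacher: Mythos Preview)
Your skeleton matches the paper's approach exactly, and you have correctly identified the crux: straightforward bookkeeping gives $+6$, and one unit must be saved. However, neither of your proposed fixes works, and this is a genuine gap.

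Speculation (a) fails: nothing in the greedy construction of $S_2$ forces a chosen $z$ to lie \emph{on} any particular $s_x$--$u^*$ path; $z$ is only guaranteed to be adjacent to some vertex on it, so the $+2$ detour cost is unavoidable in general. Speculation (b) is in the right spirit but is not a proof; the degree-minimizing property of $T_{s_x}$ does play a role, but not by sharpening the BFS-detour inequality.

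The paper saves the unit by a case split on $\delta := \operatorname{dist}_{G'}(s_x,s_y) - \operatorname{dist}_{G'}(x,y) \in \{0,1,2\}$. If $\delta \le 1$, the triangle-inequality slack you budgeted as $+2$ is actually at most $+1$, and your chain gives $1+1+2+1=5$. The interesting case is $\delta = 2$: then $s_x\,x\cdots y\,s_y$ is itself a shortest $s_x$--$s_y$ path in $G'$, so by degree-minimality of $T_{s_x}$ one has $f_{s_x,s_y} \le f_{s_x,y} + \deg_{G'}(s_y)$. Since Step~1 capped all degrees below $n^{3/5}/\log^{3/5}n$, the hypothesis $f_{s_x,s_y} > 5n^{3/5}\log^{2/5}n$ forces $f_{s_x,y} > 4n^{3/5}\log^{2/5}n$, which is still enough to place a crossing vertex $(s_x,u^*)$ in $R$ along the $s_x$--$y$ path. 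Now apply the $S_2$-detour to bound $\operatorname{dist}_H(s_x,y) \le \operatorname{dist}_{G'}(s_x,y)+2 = \operatorname{dist}_{G'}(x,y)+3$, and add only the single dominating edge $x\,s_x$ (not the one at $y$), giving $\operatorname{dist}_H(x,y)\le \operatorname{dist}_G(x,y)+4$. The point is that when $\delta=2$ you route to $y$ directly rather than to $s_y$, trading the second dominating hop for the fact that $y$ already sits on a shortest path from $s_x$.
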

\begin{proof}
	Assume, for contradiction, that there exists a pair $(u,v)$ such that
	\[
	\mathrm{dist}_H(u,v) > \mathrm{dist}_G(u,v) + 5,
	\]
	and among all such pairs choose one with minimum $\mathrm{dist}_G(u,v)$.  
	
	Let $P(u,v) = (x_0=u,x_1,\dots,x_\ell=v)$ be a shortest $u$--$v$ path in $G$ minimizing the total vertex degree.
	
	\smallskip
	\noindent\textbf{Case 1:} Some vertex of $P(u,v)$ is deleted in \textbf{Step~1}.  
	Let $x_i$ be the vertex that is first deleted in the process, and let $y$ be the root of the BFS tree added in that iteration($x_i=y$ or $\mathrm{dist}_G(x_i,y)=1$). Then $H$ contains a path from $u$ to $y$ of length at most $i+1$ and a path from $v$ to $y$ of length at most $\ell-i+1$, implying
	\[
	\mathrm{dist}_H(u,v) \le \mathrm{dist}_G(u,v) + 2,
	\]
	contradicting our assumption.  
	Hence no vertex of $P(u,v)$ is removed in \textbf{Step~1}, and thus
	\[
	\mathrm{dist}_{G'}(u,v) = \mathrm{dist}_G(u,v).
	\]
	
	\smallskip
	\noindent\textbf{Case 2:} Either $u$ or $v$ is light vertex. 
	Without loss of generality, say $\deg_{G'}(u)< n^{2/5}\log ^{3/5}n$, then $(u,x_1)\in H$.  
	If $\mathrm{dist}_H(u,v) \ge \mathrm{dist}_G(u,v) + 6$, we also have $\mathrm{dist}_H(x_1,v) \ge \mathrm{dist}_G(x_1,v) + 6$, contradicting the minimality of $\mathrm{dist}_G(u,v)$.  
	
	Thus both $u$ and $v$ must be heavy vertices. Let $s\in S_1$ dominate $u$ and $t\in S_1$ dominate $v$.  
	
	\smallskip
	\noindent\textbf{Case 3:} $f_{s,t} \le 5n^{3/5}\log ^{2/5}n$.
	Then in \textbf{Step~7} the shortest $s$--$t$ path is added to $H$.  
	Since $(s,u),(v,t)\in G'$, we have
	\[
	\mathrm{dist}_{G'}(s,t) \le \mathrm{dist}_{G'}(u,v) + 2,
	\]
	and therefore
	\[
	\mathrm{dist}_H(u,v) \le \mathrm{dist}_{G'}(s,t) + 2 \le \mathrm{dist}_{G'}(u,v) + 4 = \mathrm{dist}_G(u,v) + 4,
	\]
	which is already within $5$.  
	Hence $f_{s,t} > 5n^{3/5}\log ^{2/5}n$.
	
	\smallskip
	\noindent\textbf{Case 4:} $\mathrm{dist}_{G'}(s,t)\le \mathrm{dist}_{G'}(u,v) + 1$. 
	Consider the first vertex $u'$ along the $s$--$t$ path in $T_s$ with $f_{s,u'} > n^{3/5}\log ^{2/5}n$.  
	Since $s_{s,u'} + f_{s,p_{s,u'}} \ge f_{s,t}$, and $f_{s,p_{s,u'}}\le n^{3/5}\log ^{2/5}n$, we have $s_{s,u'} > 4n^{3/5}\log ^{2/5}n$, implying $(s,u')\in R$.  
	Thus some $x\in S_2$ lies adjacent to the $s$--$u'$ path in $T_s$. By adding a BFS tree rooted at each $x\in S_2$ in \textbf{Step~6}, we obtain
	\[
	\mathrm{dist}_H(s,t) \le \mathrm{dist}_{G'}(s,t) + 2.
	\]
	\[
	\mathrm{dist}_H(u,v) \le \mathrm{dist}_H(s,t) + 2\le\mathrm{dist}_{G'}(s,t) + 4 \le \mathrm{dist}_{G'}(u,v) + 5.
	\]
	
	\smallskip
	\noindent\textbf{Case 5:} $\mathrm{dist}_{G'}(s,t) = \mathrm{dist}_{G'}(u,v) + 2$.  
	Then $s\rightarrow u\rightarrow v\rightarrow t$ is a shortest path from $s$ to $t$ in $G'$.  
	Since $f_{s,t} > 5n^{3/5}\log ^{2/5}n$, we have $f_{s,v} + \deg_{G'}(t) > 5n^{3/5}\log ^{2/5}n$, which implies $f_{s,v} > 4n^{3/5}\log ^{2/5}n$.  
	By the similar argument as above, there exists $x\in S_2$ adjacent to the $s$--$v$ path, giving
	\[
	\mathrm{dist}_H(s,v) \le \mathrm{dist}_{G'}(s,v) + 2.
	\]
	Therefore,
	\[
	\mathrm{dist}_H(u,v) \le \mathrm{dist}_H(s,v) + 1 \le \mathrm{dist}_{G'}(u,v) + 4.
	\]
	
	\smallskip
	In all cases,
	\[
	\mathrm{dist}_H(u,v) \le \mathrm{dist}_G(u,v) + 5.
	\]
	Hence $H$ is a $5$-additive spanner for $G$.
\end{proof}
\begin{claim}\label{clm:S1S2}
	We have $|S_1|\le 2n^{3/5}\log^{2/5}n$ and $|S_2|\le 12n^{2/5}\log^{3/5} n$.
\end{claim}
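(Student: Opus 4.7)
The plan is to bound both sizes by the standard analysis of the greedy set-cover / dominating-set algorithm: if a universe of $T$ targets must be dominated by vertices from a family of size $N$, and every target is covered by at least $D$ of the available vertices, then greedy terminates in at most $(N/D)\ln T+1$ steps, since at every round some as-yet-unpicked vertex covers at least a $D/N$ fraction of the still-uncovered targets, causing the uncovered count to shrink by a factor of $e^{-D/N}$. The bound for $S_1$ then drops out immediately: after Step~1 each heavy vertex is dominated (as itself or as a neighbor) by at least $n^{2/5}\log^{3/5}n$ vertices of $V'$, while $|V'|\le n$ and the number of heavy vertices is at most $n$. Plugging these into the greedy estimate gives $|S_1|\le (n^{3/5}/\log^{3/5}n)\ln n+1\le 2n^{3/5}\log^{2/5}n$.

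The real work is for $S_2$, and the only substantial obstacle is to produce a lower bound on the $B$-degree of each pair $(v,u)\in R$. I would establish the subclaim that every such pair is adjacent in $B$ to at least $f_{v,u}/3>\tfrac13 n^{3/5}\log^{2/5}n$ vertices of $L=V'$. The key observation is that $T_v$ is a BFS tree, so along the $v$--$u$ path $P=(w_0,w_1,\dots,w_k)$ the vertex $w_i$ lies at BFS distance exactly $i$ from $v$. Hence any $y\in V'$ adjacent in $G'$ to some $w_i$ must itself sit at BFS level in $\{i-1,i,i+1\}$, so a single $y$ can be adjacent to at most three consecutive vertices of $P$. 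Double counting the incidences $\{(y,w_i):y\in N_{G'}(w_i)\}$ then yields $f_{v,u}=\sum_i\deg_{G'}(w_i)\le 3\,|N_{G'}(P)|$, and since $N_{G'}(P)\subseteq V'=L$, this is precisely the degree of $(v,u)$ in $B$.

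With this coverage bound in hand, I would finish by feeding $|L|\le n$, $D\ge n^{3/5}\log^{2/5}n/3$, and $|R|\le |S_1|\cdot |V'|\le 2n^{8/5}\log^{2/5}n$ into the greedy estimate, obtaining $|S_2|\le 3n^{2/5}\log^{-2/5}n\cdot \ln|R|+1$. Since $\ln|R|=O(\log n)$, absorbing the constants gives $|S_2|\le 12n^{2/5}\log^{3/5}n$, completing the claim.
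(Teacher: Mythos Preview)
Your proposal is correct and follows essentially the same approach as the paper's proof: both invoke the standard greedy dominating-set analysis, and for $S_2$ both rely on the key observation that any vertex of $G'$ can be adjacent to at most three vertices of a BFS path, yielding the $f_{v,u}/3$ lower bound on the $B$-degree of each $(v,u)\in R$. The only cosmetic differences are that you package the greedy analysis via the closed-form $(N/D)\ln T+1$ bound while the paper phrases it as a halving argument, and you use the slightly tighter estimate $|R|\le |S_1|\cdot|V'|$ whereas the paper simply takes $|R|\le n^2$; neither affects the outcome.
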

\begin{proof}
	For $S_1$:  
	Suppose $x$ heavy vertices remain undominated.  
	Then there exists a vertex adjacent to at least
	\(\left\lceil\frac{x n^{2/5}\log^{3/5}n}{n}\right\rceil\) of these heavy vertices.  
	Hence after \(\frac{2n^{3/5}}{\log^{3/5}n}\) rounds the number of undominated heavy vertices decreases to at most \(x/2\).  
	Repeating this halving process \(\log n\) times leaves none undominated, yielding
	\(|S_1|\le 2n^{3/5}\log^{2/5} n\).
	
	For $S_2$:  
	Each vertex \((u,v)\in R\) satisfies \(f_{u,v}>n^{3/5}\log^{2/5}n\).  
	Because each vertex can be adjacent to at most three vertices in a shortest path,
	\((u,v)\) must be adjacent to at least \(n^{3/5}\log^{2/5}n/3\) vertices in \(L\).  
	If \(x\) vertices in \(R\) remain undominated, there are at least 
	\((xn^{3/5}\log^{2/5}n)/3\) edges between them and \(L\).  
	Thus some vertex in \(L\) is adjacent to at least \(x\log^{2/5}n/(3n^{2/5})\) undominated vertices.  
	Hence after \(\frac{6n^{2/5}}{\log^{2/5}n}\) rounds 
	the number of undominated vertices drops to at most \(x/2\).  
	Starting from \(x\le n^2\), after \(12n^{2/5}\log^{3/5} n\) rounds all the vertices in \(R\) are dominated, giving
	\(|S_2|\le 12n^{2/5}\log^{3/5} n\).
\end{proof}
\begin{theorem}\label{edgenum}
	The spanner $H$ contains at most $O(n^{7/5}\log^{3/5} n)$ edges.
\end{theorem}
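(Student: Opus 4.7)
The plan is to bound separately the number of edges contributed by each of the seven steps of Construction~\ref{cons5} and then sum the contributions; six of the seven steps admit direct parameter-counting, while Step~7 requires a deduplication argument against Step~2.

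First I would handle the easy steps. In Step~1, each iteration removes a vertex of $G'$-degree at least $n^{3/5}/\log^{3/5}n$ together with its neighbors, so at most $n^{2/5}\log^{3/5}n$ iterations can occur; each inserts a BFS tree of at most $n-1$ edges, giving $O(n^{7/5}\log^{3/5}n)$ edges. Step~2 inserts at most $n\cdot n^{2/5}\log^{3/5}n = O(n^{7/5}\log^{3/5}n)$ edges by the light-degree bound. Step~3 adds at most one edge per heavy vertex, i.e., $O(n)$ edges. Steps~4 and~5 insert nothing. For Step~6, Claim~\ref{clm:S1S2} gives $|S_2|\le 12n^{2/5}\log^{3/5}n$, and each BFS tree has at most $n-1$ edges, for another $O(n^{7/5}\log^{3/5}n)$.

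The main obstacle will be Step~7. A naive bound—$|S_1|^2=O(n^{6/5}\log^{4/5}n)$ pairs, each contributing a path of up to $O(n^{3/5}\log^{2/5}n)$ edges—would give $O(n^{9/5}\log^{6/5}n)$, which is far too large. The key idea I would use is that every edge on such a path that is incident to a light vertex has already been inserted in Step~2, so only edges joining two heavy endpoints are genuinely new. Since each heavy vertex contributes at least $n^{2/5}\log^{3/5}n$ to $f_{v,u}$, a path with $f_{v,u}\le 5n^{3/5}\log^{2/5}n$ contains at most $5n^{1/5}/\log^{1/5}n$ heavy vertices, and hence at most that many heavy--heavy edges. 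Summing over the $|S_1|^2$ pairs yields $O(n^{7/5}\log^{3/5}n)$ new edges from Step~7.

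Combining all seven bounds gives $|E(H)|=O(n^{7/5}\log^{3/5}n)$, as required.
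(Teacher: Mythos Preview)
Your proposal is correct and follows essentially the same approach as the paper: a step-by-step edge count, with the same key observation for Step~7 that edges incident to light vertices were already added in Step~2, so only the at most $5n^{1/5}/\log^{1/5}n$ heavy vertices on each path can contribute new edges. The arithmetic and the use of Claim~\ref{clm:S1S2} for $|S_1|$ and $|S_2|$ match the paper's argument exactly.
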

\begin{proof}
	\textbf{Step~1:}  
	Each BFS tree contributes at most $n$ edges.  
	In each iteration we delete at least $\frac{n^{3/5}}{\log^{3/5}n}$ vertices, and each vertex is deleted at most once.  
	Hence Step~1 adds at most
	\[
	O\left(\frac{n}{\frac{n^{3/5}}{\log^{3/5}n}}\right) \cdot n = O(n^{7/5}\log^{3/5}n)
	\]
	edges.
	
	\smallskip
	\textbf{Step~2:}  
	At most $O(n^{7/5}\log^{3/5}n)$ edges are added.
	
	\smallskip
	\textbf{Step~3:}  
	This step contributes at most $n$ edges.
	
	\smallskip
	\textbf{Step~6:}  
	We add BFS trees for at most $|S_2|\le 12n^{2/5}\log^{3/5}n$ vertices,  
	each tree contains at most $n$ edges, giving  
	\[
	O(n\cdot n^{2/5}\log^{3/5}n) = O(n^{7/5}\log^{3/5}n).
	\]
	
	\smallskip
	\textbf{Step~7:}  
	There are at most $O(n^{6/5}\log^{4/5}n)$ such pairs.  
	For each pair we add a path whose total degree is at most $5n^{3/5}\log^{2/5}n$.  
	Note that edges incident to vertices of degree less than $n^{2/5}\log^{3/5}n$ have already been included earlier,  
	so each such path contributes at most $\frac{5n^{1/5}}{\log^{1/5}n}$ new edges.  
	Thus Step~7 adds at most
	\[
	O\left(n^{6/5}\log^{4/5}n \cdot \frac{n^{1/5}}{\log^{1/5}n}\right)
	= O(n^{7/5}\log^{3/5}n)
	\]
	edges.
	
	\smallskip
	Combining all steps, $H$ contains $O(n^{7/5}\log^{3/5}n)$ edges.
\end{proof}

\begin{theorem}\label{time}
	The construction of $H$ can be implemented in 
	\[
	O\!\left(\min\left(mn^{3/5}\log^{2/5}n,\;\frac{n^{11/5}}{\log^{1/5}n}\right)\right)
	\]
	time.
\end{theorem}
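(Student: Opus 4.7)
The plan is to bound the running time of each of the seven steps in Construction~\ref{cons5} separately and show that the total fits the claimed $\min$ bound. The central structural fact is that once Step~1 terminates, every remaining vertex of $G'$ has degree strictly below $n^{3/5}/\log^{3/5}n$, so $|E(G')|\le m':=O(n^{8/5}/\log^{3/5}n)$; consequently every step after Step~1 can be analyzed in terms of $m'$ rather than $m$, which is the source of the $O(n^{11/5}/\log^{1/5}n)$ bound in the dense regime. Combined with Claim~\ref{clm:S1S2} ($|S_1|=O(n^{3/5}\log^{2/5}n)$ and $|S_2|=O(n^{2/5}\log^{3/5}n)$), this lets me give two analyses of each step---one using $m$ and one using $m'$---and retain whichever is smaller.

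Proceeding step by step: Step~1 performs at most $K=O(n^{2/5}\log^{3/5}n)$ iterations (since each removes $\ge n^{3/5}/\log^{3/5}n$ vertices), each running one BFS on the current $G'$ plus deletions that amortize to $O(m)$ in total, for $O(K(m+n))=O(mn^{2/5}\log^{3/5}n)$. Step~2 costs $O(m)$. Step~3 is a greedy dominating-set computation with total incidence $O(m')$. The workhorse is Step~4: $|S_1|$ runs of degree-weighted Dijkstra on $G'$, each $O(m'+n\log n)$ with a Fibonacci heap, totaling $O(|S_1|(m'+n\log n))=O(n^{11/5}/\log^{1/5}n)$, with the alternative bound $O(|S_1|m)=O(mn^{3/5}\log^{2/5}n)$ when evaluated on $G$. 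Step~5 builds the bipartite graph $B$ by traversing, for each $(v,u)\in R$, the $v$--$u$ path in $T_v$ together with the $G'$-neighbors of its vertices, at cost $O(f_{v,u})=O(n^{3/5}\log^{2/5}n)$ per pair. Step~6 runs greedy set cover on $B$ in $O(|C|)$ time and performs $|S_2|$ BFSs on $G'$ in $O(|S_2|(m'+n))$. Step~7 iterates over at most $|S_1|^2$ pairs, each tracing a tree path of total vertex-degree at most $5n^{3/5}\log^{2/5}n$, for $O(|S_1|^2\cdot n^{3/5}\log^{2/5}n)=O(n^{9/5}\log^{6/5}n)$.

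The main obstacle is bounding $|R|$ and the incidence count $|C|$, on which the times of Steps~5 and~6 rest. The key observation is that for each fixed $v\in S_1$, the vertices $u$ with $(v,u)\in R$ form an antichain in $T_v$: the condition $f_{v,p_{v,u}}\le n^{3/5}\log^{2/5}n$ forces $u$ to be the first vertex on its root-to-$u$ path where $f$ exceeds the threshold, so no such $u$ is a descendant of another. Hence the $T_v$-subtrees rooted at these $u$ are pairwise disjoint, giving $\sum_{(v,u)\in R_v} s_{v,u}\le 2m'$. Combined with $s_{v,u}>3n^{3/5}\log^{2/5}n$, this yields $|R_v|=O(m'/(n^{3/5}\log^{2/5}n))$, and summing over $v\in S_1$ gives $|R|=O(m')$ and $|C|\le\sum_{(v,u)\in R} f_{v,u}=O(|R|\cdot n^{3/5}\log^{2/5}n)=O(n^{11/5}/\log^{1/5}n)$. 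With these counts, each step's time is bounded by the appropriate term of the $\min$, and summing yields the claimed bound.
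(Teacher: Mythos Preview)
Your analysis of Steps~2--7 matches the paper's (and your antichain/disjoint-subtree argument bounding $|R|$ and $|C|$ is exactly the mechanism behind the paper's terse $|C|\le |S_1|\cdot D$), but there is a genuine gap at Step~1. You bound it by $O(K(m+n))$ with $K=O(n^{2/5}\log^{3/5}n)$ iterations, charging each BFS $O(m)$, for $O(mn^{2/5}\log^{3/5}n)$. That fits under the first term $mn^{3/5}\log^{2/5}n$, but in the dense regime---say $m=\Theta(n^2)$---it is $\Theta(n^{12/5}\log^{3/5}n)$, which exceeds the target $n^{11/5}/\log^{1/5}n$. Step~1's whole purpose is to make later steps cheap on dense inputs, so Step~1 itself must also be shown cheap there, and the crude per-iteration $O(m)$ charge does not do this.

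The paper closes this with a one-line amortization you are missing: at iteration $i$ the BFS costs $O(m_i)$ where $m_i$ is the \emph{current} edge count of $G'$, and the chosen max-degree vertex has degree at least the current average $2m_i/n_i\ge 2m_i/n$, so the iteration removes at least $2m_i/n$ vertices. Hence the BFS cost per removed vertex is $O(n)$, and since at most $n$ vertices are ever removed, $\sum_i m_i=O(n^2)$; as $n^2\le n^{11/5}/\log^{1/5}n$, Step~1 fits both sides of the $\min$. A second, smaller omission: your $O(n^{9/5}\log^{6/5}n)$ bound for Step~7 (and any $O(n^2)$-type term) is not bounded by $mn^{3/5}\log^{2/5}n$ when $m$ is very small; the paper dispatches this by noting that for $m\le n^{7/5}$ one outputs $G$ itself, so one may assume $m>n^{7/5}$, after which $mn^{3/5}\log^{2/5}n\ge n^{2}\log^{2/5}n$ dominates all such terms.
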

\begin{proof}
	We may assume $m>n^{7/5}$; otherwise we can simply add all the edges to $H$ directly.
	
	\smallskip
	\textbf{Step~1:}  
	Each BFS takes $O(m)$ time and removes at least $\Theta(m/n)$ vertices,  
	yielding a total cost of $O(n^2)$.
	
	\smallskip
	\textbf{Step~2:}  
	This step requires $O(m)$ time.
	
	\smallskip
	\textbf{Step~3:}  
	This step can be completed in $O(n^2)$ time.
	
	\smallskip
	Let $D=|E'|$. Clearly $D\le m$.  
	Since each vertex has degree at most $O\big(\frac{n^{3/5}}{\log^{3/5}n}\big)$ in $G'$,  
	\[
	D\le \frac{n^{8/5}}{\log^{3/5}n}.
	\]
	\smallskip
	
	\textbf{Step~4:}  
	For each $v\in S_1$ (with $|S_1|=O(n^{3/5}\log^{2/5}n)$), we run a BFS, costing
	\[
	O(|S_1|\cdot D).
	\]
	
	\smallskip
	\textbf{Step~5:}  
	For each $u\in S_1$, adding $(u,v)$ to $R$ requires that the subtree rooted at $v$ has 
	size at least $3n^{3/5}\log^{2/5}n$.  
	Moreover, the subtrees corresponding to different $(u,v)$ are disjoint.  
	Hence
	\[
	|R| \le |S_1| \cdot \frac{D}{3n^{3/5}\log^{2/5}n},
	\]
	and the bipartite graph has at most
	\[
	|C| \le |S_1|\cdot D
	\]
	edges.
	
	\smallskip
	\textbf{Step~6:}  
	Computing $S_2$ takes $O(|C|)$ time and 
	constructing BFS trees for all vertices in $S_2$ costs $O(|S_2|\cdot D)$. Thus $O(|S_1|D)$ in this step.
	
	\smallskip
	\textbf{Step~7:}  
	The BFS trees are already computed, so extracting the paths requires at most $O(n^2)$ time.
	
	\smallskip
	Therefore, combining all the steps, the total running time is
	\[
	O\!\left(\min\left(mn^{3/5}\log^{2/5}n,\;\frac{n^{11/5}}{\log^{1/5}n}\right)\right).
	\]
\end{proof}

\begin{theorem}\label{reduc}
	Suppose there exists an algorithm that, for any graph with $n$ vertices and $m$ edges, 
	computes a $(2k+1)$-additive spanner with $O(f(n,m))$ edges in $O(g(n,m))$ time.  
	Then there exists an algorithm that computes a $(2k)$-additive spanner with 
	$O(f(2n,2m))$ edges in $O(g(2n,2m)+m)$ time.
\end{theorem}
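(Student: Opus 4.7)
The plan is to use the classical bipartite double-cover trick: making the host graph bipartite forces distances to respect a parity, which converts an odd additive guarantee into an even one for free.

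First I would construct the bipartite double cover $G' = (V', E')$ where $V' = V \times \{0,1\}$ and $E' = \{\,((u,0),(v,1)) : (u,v) \in E\,\}$. Then $|V'| = 2n$, $|E'| = 2m$, and $G'$ is bipartite with bipartition $V \times \{0\}$ versus $V \times \{1\}$. A walk in $G'$ of length $\ell$ starting at $(u,0)$ ends at a vertex of the form $(w, \ell \bmod 2)$, so walks in $G'$ correspond bijectively to walks in $G$ of a prescribed parity. In particular, for every $u,v \in V$, $d_{G'}\!\big((u,0),(v,d_G(u,v) \bmod 2)\big) = d_G(u,v)$.

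Next I would apply the hypothesized algorithm to $G'$ to obtain a $(2k+1)$-additive spanner $H'$ of $G'$ with $O(f(2n,2m))$ edges in $O(g(2n,2m))$ time. Then I define the projected subgraph $H \subseteq G$ by including $(u,v) \in E$ whenever $((u,0),(v,1)) \in E(H')$ or $((u,1),(v,0)) \in E(H')$. Clearly $|E(H)| \le |E(H')| = O(f(2n,2m))$, and the projection takes $O(m)$ time on top of building $G'$ and running the algorithm, giving total time $O(g(2n,2m)+m)$.

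The one nontrivial step is the distance guarantee, so let me spell it out. Fix any $u,v \in V$ and let $p = d_G(u,v) \bmod 2$. By the observation above, $d_{G'}((u,0),(v,p)) = d_G(u,v)$, so $d_{H'}((u,0),(v,p)) \le d_G(u,v) + 2k+1$. Since $H'$ is a subgraph of the bipartite graph $G'$, every $(u,0)$--$(v,p)$ path in $H'$ has length of parity $p$; hence the additive error must itself be even, which strengthens the bound to $d_{H'}((u,0),(v,p)) \le d_G(u,v) + 2k$. Finally, any path in $H'$ from $(u,0)$ to $(v,p)$ projects down to a walk in $H$ from $u$ to $v$ of the same length, so $d_H(u,v) \le d_{H'}((u,0),(v,p)) \le d_G(u,v) + 2k$, as required.

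The only place there is anything to think about is the parity-of-error step; the rest is bookkeeping. Because $G'$ is bipartite, the crux is simply noting that bipartiteness forces the parity of path lengths between fixed endpoints to be constant, which is what improves the odd additive guarantee to an even one.
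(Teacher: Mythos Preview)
Your proof is correct and takes essentially the same approach as the paper: the paper's bipartite graph $G_0$ with copies $L_u,R_u$ is exactly your bipartite double cover with $(u,0),(u,1)$, and its two-case analysis (even versus odd $\mathrm{dist}_G(u,v)$) is your parity argument unpacked. The only cosmetic difference is that the paper writes out the even and odd cases separately rather than parameterizing by $p = d_G(u,v)\bmod 2$.
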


\begin{proof}
	Given a graph $G=(V,E)$, construct a bipartite graph
	\[
	G_0=(L,R,E_0),\qquad L=R=V,
	\]
	where for each $(u,v)\in E$, we add $(L_u,R_v)$ and $(R_u,L_v)$ to $E_0$.
	
	\smallskip
	Run the assumed algorithm on $G_0$ to obtain a $(2k+1)$-additive spanner $H_0$.
	We now construct $H$ for $G$ by including $(u,v)\in E$ whenever $(L_u,R_v)\in H_0$.
	
	\smallskip
	Consider any pair $u,v\in V$:
	
	\begin{itemize}
		\item \textbf{Case 1:} The shortest $u$--$v$ path in $G$ has even length:
		\[
		(x_0=u, x_1,\dots,x_{2\ell}=v).
		\]
		Then in $G_0$ there is a corresponding path
		\[
		(L_{x_0}, R_{x_1}, L_{x_2},\dots,L_{x_{2\ell}}).
		\]
		Since $H_0$ is a $(2k+1)$-additive spanner, there exists a path $P$ in $H_0$ from $L_{x_0}$ to $L_{x_{2\ell}}$
		of length at most $2\ell+2k+1$.  
		But $H_0$ is bipartite, so any path between two vertices in $L$ must have even length; hence
		\[
		\mathrm{dist}_{H_0}(L_{x_0},L_{x_{2\ell}})\le 2\ell+2k.
		\]
		By construction of $H$, each $(L_u,R_v)\in H_0$ corresponds to $(u,v)\in H$,
		thus
		\[
		\mathrm{dist}_{H}(u,v)\le 2\ell+2k=\mathrm{dist}_{G}(u,v)+2k.
		\]
		
		\item \textbf{Case 2:} The shortest $u$--$v$ path in $G$ has odd length:
		\[
		(x_0=u, x_1,\dots,x_{2\ell+1}=v),
		\]
		yielding the $G_0$ path
		\[
		(L_{x_0}, R_{x_1},\dots,R_{x_{2\ell+1}}).
		\]
		There exists a path in $H_0$ from $L_{x_0}$ to $R_{x_{2\ell+1}}$ 
		of length at most $2\ell+2k+2$.  
		Again, $H_0$ is bipartite, so any path connecting $L$ to $R$ has odd length, giving
		\[
		\mathrm{dist}_{H_0}(L_{x_0},R_{x_{2\ell+1}})\le 2\ell+2k+1.
		\]
		Therefore,
		\[
		\mathrm{dist}_{H}(u,v)\le 2\ell+2k+1 = \mathrm{dist}_{G}(u,v)+2k.
		\]
	\end{itemize}
	
	Hence, for every pair $(u,v)$,
	\[
	\mathrm{dist}_H(u,v) \le \mathrm{dist}_G(u,v) + 2k,
	\]
	so $H$ is a $(2k)$-additive spanner of $G$.
	The edge bound is $O(f(2n,2m))$, and the running time is $O(g(2n,2m)+m)$.
\end{proof}

\begin{theorem}[Restatement of Theorem~\ref{mainth}]
	There exists a deterministic algorithm that, given a graph $G=(V,E)$ with $n=|V|$ and $m=|E|$, constructs a $4$-additive spanner with $\tilde{O}(n^{7/5})$ edges in $\tilde{O}(\min\{mn^{3/5}, n^{11/5}\})$ time.
\end{theorem}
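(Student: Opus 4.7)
The plan is to apply the bipartite reduction of Theorem~\ref{reduc} with $k=2$, using Construction~\ref{cons5} as the base algorithm. By the three preceding theorems, Construction~\ref{cons5} is a deterministic procedure that, on any input graph with $n'$ vertices and $m'$ edges, produces a 5-additive spanner with
\[
f(n',m') = O\!\left((n')^{7/5}\log^{3/5} n'\right)
\]
edges in time
\[
g(n',m') = O\!\left(\min\!\left\{m'(n')^{3/5}\log^{2/5} n',\ (n')^{11/5}/\log^{1/5} n'\right\}\right).
\]

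First I would instantiate Theorem~\ref{reduc} with these $f$ and $g$ and $k=2$. The reduction runs Construction~\ref{cons5} on the bipartite double cover $G_0$, which has $2n$ vertices and $2m$ edges, and then lifts the resulting 5-additive spanner of $G_0$ back to a 4-additive spanner of $G$ in an extra $O(m)$ time. The theorem then guarantees an edge count of $O(f(2n,2m))$ and a running time of $O(g(2n,2m)+m)$.

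Then I would verify that the constant-factor blow-up in the parameters is absorbed by the $\tilde{O}$ notation: $f(2n,2m) = O((2n)^{7/5}\log^{3/5}(2n)) = \tilde{O}(n^{7/5})$, and similarly $g(2n,2m) = \tilde{O}(\min\{mn^{3/5},\,n^{11/5}\})$, while the additive $O(m)$ term in the running time is dominated by $g(2n,2m)$ for any nontrivial regime. Since both Construction~\ref{cons5} and the reduction of Theorem~\ref{reduc} are deterministic, the composite algorithm is deterministic, matching the statement of Theorem~\ref{mainth}.

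There is no real obstacle here; the only thing to be careful about is making sure the polylogarithmic factors remain of the claimed $\tilde{O}$ form after substituting $2n$ for $n$ and $2m$ for $m$, which is immediate since $\log(2n) = \log n + O(1)$. Everything else is a direct composition of previously established results.
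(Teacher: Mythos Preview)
Your proposal is correct and follows exactly the paper's own proof: apply Theorem~\ref{reduc} with $k=2$ to Construction~\ref{cons5}, and invoke Theorems~\ref{edgenum} and~\ref{time} for the edge and time bounds. The additional care you take with the $2n$, $2m$ substitution and the absorbed $O(m)$ term is fine and even slightly more explicit than the paper's one-line argument.
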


\begin{proof}
	We apply Theorem~\ref{reduc} with $k=2$, combining it with our $5$-additive spanner construction. 
	By Theorems~\ref{edgenum} and~\ref{time}, this yields a $4$-additive spanner containing $\tilde{O}(n^{7/5})$ edges and computable in $\tilde{O}(\min\{mn^{3/5}, n^{11/5}\})$ time, as claimed.
\end{proof}
\section{Conclusion}
We presented a deterministic algorithm for constructing a $4$-additive spanner 
with $\tilde{O}(n^{7/5})$ edges, computable in 
$\tilde{O}(\min\{mn^{3/5},\,n^{11/5}\})$ time. 
Our approach extends the high-degree vertex elimination technique previously 
used for $2$-additive spanners and replaces random sampling with 
dominating-set based derandomization, yielding a substantial improvement over existing results. Through the analysis of 5-additive spanners and the subsequent reduction, we bypass the need to solve the constrained single-source shortest path problem. Instead, the algorithm only requires BFS tree computations, which makes the algorithm simpler.

Several questions remain open. 
An important direction is to narrow the gap between the current 
$\tilde{O}(n^{7/5})$ upper bound and the known 
$\Omega(n^{4/3})$ lower bound on the size of $4$-additive spanners. 

Another challenge is to reduce the construction time to $O(n^{2})$, 
matching the fastest known algorithms for other spanner families.
\section*{Acknowledgments}

I am grateful to Jason Li and Yusi Chen for valuable discussions on $4$-additive spanners.

\begingroup
\setlength{\itemsep}{0.5em} 
\nocite{*}                 
\bibliographystyle{plain}     
\bibliography{ref}
\endgroup

\appendix
\section*{Appendix A: Pseudocode}

\begin{algorithm}[H]
	\caption{Deterministic Construction of 5-Additive Spanner}\label{pse5}
	\begin{algorithmic}[1]
		\State \textbf{Input:} Unweighted undirected graph $G = (V, E)$
		\State \textbf{Output:} 5-additive spanner $H = (V, E_H)$
		\State Initialize $G' \gets G$, $H \gets (V, \emptyset)$
		\vspace{0.5em}
		\State \textbf{// Step 1: High-degree elimination}
		\While{there exists vertex $v \in V(G')$ with $\deg_{G'}(v) \ge n^{3/5}/\log^{3/5} n$}
		\State Add a BFS tree rooted at $v$ in $G'$ to $H$
		\State Remove $v$ and all its neighbors from $G'$
		\EndWhile
		\vspace{0.5em}
		\State \textbf{// Step 2: Light vertices}
		\For{each vertex $u$ in $G'$ with $\deg_{G'}(u) < n^{2/5}\log^{3/5} n$}
		\State Add all edges incident to $u$ to $H$
		\EndFor
		\vspace{0.5em}
		\State \textbf{// Step 3: Dominating set for heavy vertices}
		\State Let $S_1 \subseteq V(G')$ be a dominating set of heavy vertices in $G'$
		\For{each heavy vertex $v\in G'$}
		\State Add edge $(v, u)$ to $H$ for some $u \in S_1$ such that $(v,u) \in G'$
		\EndFor
		\vspace{0.5em}
		\State \textbf{// Step 4: Construct specialized BFS trees from $S_1$}
		\For{each $v \in S_1$}
		\State Construct BFS tree $T_v$ rooted at $v$ minimizing total vertex degree on paths
		\State For each $u$, compute $f_{v,u}$ = total degree from $v$ to $u$ in $T_v$
		\State For each $u$, compute $s_{v,u}$ = total degree in subtree rooted at $u$ in $T_v$
		\EndFor
		\vspace{0.5em}
		\State \textbf{// Step 5: Build auxiliary bipartite graph}
		\State Initialize bipartite graph $B = (L, R, C)$ with $L = V(G')$, $R = \emptyset$, $C = \emptyset$
		\For{each $v \in S_1$, $u \in V(G')$}
		\If{$f_{v,u} > n^{3/5}\log^{2/5} n$, $f_{v,p_{v,u}} \le n^{3/5}\log^{2/5} n$, and $s_{v,u} > 3n^{3/5}\log^{2/5} n$}
		\State Add $(v, u)$ to $R$
		\For{each $x$ adjacent to some vertex on the $v$-$u$ path in $T_v$}
		\State Add edge $(x, (v, u))$ to $C$
		\EndFor
		\EndIf
		\EndFor
		\vspace{0.5em}
		\State \textbf{// Step 6: Add BFS trees from dominators of long paths}
		\State Let $S_2 \subseteq L$ be a dominating set of $R$ in $B$
		\For{each $x \in S_2$}
		\State Add BFS tree rooted at $x$ in $G'$ to $H$
		\EndFor
		\vspace{0.5em}
		\State \textbf{// Step 7: Add short paths}
		\For{each pair $v,u\in S_1$ such that $f_{v,u} \le 5n^{3/5}\log^{2/5} n$}
		\State Add the path from $v$ to $u$ in $T_v$ to $H$
		\EndFor
		\vspace{0.5em}
		\State \Return $H$
	\end{algorithmic}
\end{algorithm}

\begin{algorithm}[H]
	\caption{4-Additive Spanner Construction}
	\begin{algorithmic}[1]
		\State \textbf{Input:} Graph $G = (V, E)$
		\State \textbf{Output:} A $4$-additive spanner $H$ of $G$
		\vspace{0.5em}
		\State Let $L \gets V$, $R \gets V$
		\State Construct bipartite graph $G_0 = (L, R, E_0)$ as follows:
		\For{each edge $(u, v) \in E$}
		\State Add edges $(L_u, R_v)$ and $(R_u, L_v)$ to $E_0$
		\EndFor
		\vspace{0.5em}
		\State Run \textbf{Algorithm} \autoref{pse5} on $G_0$ to obtain $5$-additive spanner $H_0$
		\State Initialize $H \gets \emptyset$
		\For{each edge $(u, v) \in E$}
		\If{$(L_u, R_v) \in E(H_0)$ or $(R_u, L_v) \in E(H_0)$}
		\State Add edge $(u, v)$ to $H$
		\EndIf
		\EndFor
		\vspace{0.5em}
		\State \Return $H$
	\end{algorithmic}
\end{algorithm}

\end{document}